\newcommand{\maybeincludegraphics}[2]{%
	\IfFileExists{#1}{\includegraphics[width=#2]{#1}}{\fbox{\parbox[c][0.2\textheight][c]{#2}{\centering \texttt{#1} not found}}}%
}
\newtheorem{theorem}{Theorem}[section]
\newtheorem{lemma}[theorem]{Lemma}
\theoremstyle{definition}
\begin{document}
	
	\title{Hybrid Quantum-Classical Learning of Nonlinear Entanglement Witnesses via Continuous-Variable Quantum Neural Networks}
	
	\author{Mohammad Rezaei Shokouh}
	\affiliation{Department of Physics, Ferdowsi University of Mashhad, Mashhad, Iran} 
	
	\author{Hossein Davoodi Yeganeh \footnote {corresponding author: h.yeganeh@ariaquanta.com, h.yeganeh@ssau.ac.ir}}
	\affiliation{AriaQuanta Quantum Computing Center, Tehran, Iran}
	\affiliation{Quantum Research Center, Shahid Sattari University of Aeronautical Sciences and Technology, Tehran, Iran}

	

	\begin{abstract}
A major challenge in quantum information is characterizing entanglement, for which entanglement witnesses offer effective means of detecting quantum correlations. We introduce a hybrid quantum-classical framework that learns a nonlinear entanglement witness directly from quantum data using continuous-variable quantum neural networks (CV-QNNs). Our architecture combines variational interferometers, squeezers and non-Gaussian Kerr gates with a small classical neural head to output a scalar witness value. Numerical simulations were conducted on two- and three-mode families, including Gaussian and non-Gaussian states in both pure and mixed forms. We observed over $99\%$ classification accuracy and a robust performance gap compared to strong classical baselines, especially when scaling from two to three modes. Robustness to photon loss is further quantified under a finite number of measurement shots. On the theory side, we show that when the quantum measurement stage is informationally complete, the hybrid model can approximate any continuous witness-like functional on compact sets of states.Our findings highlight CV-QNNs as a promising framework for data-driven quantum state characterization and propose specific benchmarks where near-term photonic platforms offer tangible advantages.
	\\

	{\bf Keyword:}  Entanglement, CV-QNNs, Near-term photonic devices, Entanglement witness
	\end{abstract}

	\maketitle
	
	\section{Introduction}
	\label{sec:introduction}
	
	Entanglement is a cornerstone of the second quantum revolution, serving as a fundamental resource that underpins potential advances in quantum computing, secure communication, and precision metrology \cite{r1, r2}. The ability to generate, manipulate, and, crucially, certify the presence of this uniquely quantum correlation is therefore a paramount task for the advancement of quantum technologies. However, the problem of distinguishing entangled states from separable ones—the separability problem—is known to be NP-hard in the general case \cite{r3}, presenting a formidable challenge as quantum systems scale in size and complexity.
	
	Historically, analytical and operational tools have been developed to tackle this issue. For low-dimensional bipartite systems, criteria like the Positive Partial Transpose (PPT) condition provide a powerful, necessary, and sufficient condition for separability \cite{r4}. Yet, its sufficiency fails in higher dimensions and for multipartite systems, where the geometry of the set of separable states becomes notoriously complex. The primary tool for experimental certification has been the entanglement witness (EW) formalism \cite{r5, r6}. A linear EW is a Hermitian operator $W$ constructed such that its expectation value is non-negative for all separable states ($\mathrm{Tr}(W\rho) \ge 0$) but negative for at least one entangled state. Geometrically, it defines a separating hyperplane in the space of density matrices. While powerful, this approach suffers from two major limitations: (i) a single linear witness is inherently non-universal, capable of detecting only a subset of entangled states, and (ii) finding an optimal witness for an arbitrary, unknown state is a computationally hard problem. While nonlinear witnesses can define more complex decision boundaries, their analytical construction is generally intractable.
	
	This intractability motivates a paradigm shift toward data-driven approaches, where a decision function—a witness—is not derived from first principles but is learned from examples \cite{r7,r22,r23}. Initial efforts have successfully applied classical machine learning (ML) models, such as support vector machines and neural networks, to this task \cite{r8, r9,r26,r27}. However, existing models typically rely on classical representations of quantum states, such as the elements of a tomographically reconstructed density matrix. This methodology introduces a fundamental bottleneck: it depends on a classical shadow of the quantum state—a process that is often resource-intensive and may obscure the subtle, high-dimensional correlations indicative of genuine multipartite entanglement. Moreover, as the Hilbert space dimension increases exponentially, classical machine learning models inevitably encounter the curse of dimensionality, making it difficult to learn the complex boundary of the separable set from a feasible number of examples.
	
	A more native and potentially more powerful approach is to utilize Quantum Machine Learning (QML) models that can process quantum data directly \cite{r10}. Among the various platforms for QML, Continuous-Variable (CV) quantum computing, particularly in photonic systems, offers a natural framework for encoding and manipulating quantum information \cite{r11, r12,r24}. The layered, variational architecture of a Continuous-Variable Quantum Neural Network (CV-QNN), as introduced by Killoran et al. \cite{r13}, provides a compelling ansatz for learning complex functions of quantum states. Inspired by classical deep learning, a CV-QNN consists of a sequence of layers, each performing a Gaussian transformation (analogous to a classical affine map) followed by a non-Gaussian operation (analogous to a nonlinear activation function), granting it universal function approximation capabilities within the quantum domain.
	
	In this work, we propose and numerically validate a CV-QNN designed to function as a universal, trainable, and nonlinear entanglement witness. Our central hypothesis is that by processing quantum states directly within their native Hilbert space, such a model can overcome the limitations of classical ML and exhibit a performance advantage that scales with system complexity. We test this hypothesis by training our hybrid quantum-classical model to distinguish separable from entangled states in both two-mode and three-mode scenarios, using a rich dataset of Gaussian and non-Gaussian states. Our main contributions are as follows.
	 We introduce the theoretical foundations, from a brief overview of continuous-variable quantum mechanics to the formulation of  CV-QNN in Section \ref{sec:theory}.  The next section \ref{sec:architect} presents an overview of our hybrid quantum-classical pipeline, covering architectural details, variational ansatze, data generation and preprocessing steps, training procedures, and the evaluation protocol used for comparison with classical models. In section \ref{sec:results}, we report numerical experiments assessing performance, scalability, and robustness of the proposed model, alongside a theoretical justification and discussion analysis. The final section \ref{sec:con} presents our conclusions and summarizing the key findings.

	\section{Theoretical Background}
	\label{sec:theory}
	
	In this section, we provide the necessary theoretical foundations, beginning with a brief review of the continuous-variable formalism of quantum mechanics and culminating in the precise formulation of our CV-QNN as a learnable, nonlinear witness functional.
	
	\subsection{Continuous-Variable Systems and Phase-Space Formalism}
	
	While quantum information is often introduced in the context of discrete, two-level systems (qubits), many physical systems, particularly those in quantum optics, are naturally described by continuous degrees of freedom. These are known as Continuous-Variable (CV) systems \cite{r12, r13}. The quantum state of an $M$-mode CV system lives in an infinite-dimensional Hilbert space $\mathcal{H} = \bigotimes_{i=1}^M \mathcal{H}_i$, where each $\mathcal{H}_i$ is a Fock space spanned by the number states $\{\ket{n}\}_{n=0}^\infty$.
	
	The state of each mode is defined by its creation and annihilation operators, $\hat{a}^\dagger$ and $\hat{a}$, which obey the canonical commutation relation $[\hat{a}, \hat{a}^\dagger]=1$. An equivalent and often more intuitive description is provided by the dimensionless quadrature operators, analogous to classical position and momentum:
	\begin{equation}
		\hat{x} = \frac{\hat{a} + \hat{a}^\dagger}{\sqrt{2}}, \quad \hat{p} = \frac{\hat{a} - \hat{a}^\dagger}{i\sqrt{2}}.
	\end{equation}
	These operators are Hermitian and satisfy the commutation relation $[\hat{x}, \hat{p}] = i$, which directly leads to the Heisenberg uncertainty principle, $\Delta x \Delta p \ge 1/2$. This formalism allows us to represent quantum states in a 2D phase space spanned by the eigenvalues of $\hat{x}$ and $\hat{p}$.
	
	A powerful tool for this representation is the Wigner function, $W(x, p)$, a quasi-probability distribution that provides a complete description of the quantum state. While it can take negative values for non-classical states—a key signature of quantumness—its marginals correspond to the probability distributions of the quadratures. States whose Wigner function is a positive-definite Gaussian function are known as Gaussian states. These include the vacuum state $\ket{0}$, coherent states $\ket{\alpha}$, and squeezed states. Geometrically, their Wigner function is represented by an ellipse in phase space, whose center, orientation, and axis lengths are determined by the first and second moments of the quadrature operators.
	
	Unitary operations in CV systems are generated by Hamiltonians that are polynomials in the mode operators. Operations generated by Hamiltonians at most quadratic in these operators are called Gaussian operations, as they preserve the Gaussian character of states. The fundamental set of single- and two-mode Gaussian gates includes:
	\begin{itemize}[leftmargin=*]
		\item \textbf{Displacement}, $D(\alpha) = \exp(\alpha \hat{a}^\dagger - \alpha^* \hat{a})$: This operation shifts the center of the state in phase space by a complex amount $\alpha$.
		\item \textbf{Rotation (Phase)}, $R(\phi) = \exp(i\phi \hat{n})$: This rotates the state around the origin in phase space by an angle $\phi$.
		\item \textbf{Squeezing}, $S(z) = \exp(\frac{1}{2}(z^* \hat{a}^2 - z \hat{a}^{\dagger 2}))$ where $z=re^{i\phi}$: This operation attenuates the variance of one quadrature (squeezing) at the expense of amplifying the variance of the orthogonal quadrature. In phase space, this corresponds to deforming a circular vacuum state into an ellipse.
		\item \textbf{Beamsplitter}, $BS(\theta, \phi) = \exp(\theta(e^{i\phi}\hat{a}_i^\dagger \hat{a}_j - e^{-i\phi}\hat{a}_i \hat{a}_j^\dagger))$: This is a two-mode gate that interferes or mixes two modes.
	\end{itemize}
	Although Gaussian operations serve as essential building blocks in  CV quantum computing, they are insufficient for achieving universality. According to the Gottesman-Knill theorem in the discrete-variable regime, computations restricted to Gaussian states, operations, and measurements can be efficiently simulated on classical hardware. To harness the full potential of CV quantum computation and access genuinely non-classical features, the inclusion of at least one non-Gaussian component is necessary \cite{r14}.
	
	A canonical choice for a non-Gaussian gate is one generated by a Hamiltonian of third order or higher in the mode operators. In this work, we utilize the Kerr gate, generated by the Hamiltonian $H_{\text{Kerr}} \propto (\hat{a}^\dagger \hat{a})^2 = \hat{n}^2$. The corresponding unitary is given by:
	\begin{equation}
		K(\kappa) = \exp(i\kappa \hat{n}^2).
	\end{equation}
	This gate imparts a phase shift that is proportional to the square of the photon number, a distinctly nonlinear effect. In phase space, it distorts the shape of the Wigner function, capable of transforming a Gaussian state into a complex, non-Gaussian one. This ability to introduce non-Gaussianity is essential for the CV-QNN to learn highly non-trivial functions, making the Kerr gate the direct analogue of a nonlinear activation function in a classical neural network.
	
	\subsection{Continuous-Variable Quantum Neural Network Layers}
	
	The architecture of a Continuous-Variable Quantum Neural Network (CV-QNN) is designed to mirror the structure of a classical feedforward neural network, which consists of a series of interconnected layers. In a classical network, each layer typically performs an affine transformation on its input vector (a linear map followed by a bias shift), and then applies a nonlinear activation function. The CV-QNN layer accomplishes an analogous transformation, not on a classical vector, but directly on a quantum state living in the phase space of the system \cite{r13}.
	
	A single layer of our CV-QNN, denoted by $L_k$ with a set of trainable parameters $\theta_k$, is a composite unitary transformation built from four distinct physical operations. The sequence of these operations is critical and follows the order implemented in our numerical simulations: an interferometer, a squeezing layer, a displacement layer, and finally, a nonlinear Kerr layer. The full transformation for a single layer acting on an $M$-mode state is thus:
	\begin{equation}
		L_k(\theta_k) = K(\vec{\kappa}_k) \circ D(\vec{\alpha}_k) \circ S(\vec{z}_k) \circ U_{\text{I}}(\vec{\phi}_k, \vec{\varphi}_k).
	\end{equation}
	
	\subsubsection{The Interferometer: A Trainable Linear Transformation}
	The first stage of the layer is a multi-mode passive optical interferometer, $U_{\text{I}}$. This operation performs a general energy-conserving linear transformation on the mode operators, effectively mixing the information contained in the different modes. An arbitrary $M$-mode linear interferometer can be decomposed into a mesh of two-mode beamsplitters ($BS$) and single-mode phase shifters ($R$) \cite{r28}. For our $M$-mode system, the unitary is given by:
	\begin{equation}
		U_{\text{I}}(\vec{\phi}, \vec{\varphi}) = \left( \prod_{i<j} BS_{ij}(\phi_{ij}) \right) \left( \prod_{i=1}^M R_i(\varphi_i) \right).
	\end{equation}
	This operation is analogous to the multiplication by a weight matrix $\mathbf{W}$ in a classical neural network. The parameters of the phase shifters and beamsplitters, $\vec{\phi}$ and $\vec{\varphi}$, are trainable and correspond to the elements of this matrix. In phase space, this unitary transformation corresponds to a rotation of the entire state's Wigner function representation.
	
	\subsubsection{Gaussian Affine Transformations: Squeezing and Displacement}
	Following the interferometer, two layers of single-mode Gaussian gates are applied to perform an affine transformation on each mode individually.
	
	\textbf{The Squeezing Layer}, $S(\vec{z}) = \bigotimes_{i=1}^M S_i(z_i)$, applies a squeezing operation to each mode. The squeezing gate $S(z_i)$ transforms the quadrature operators as $\hat{x}_i \to e^{-r_i}\hat{x}_i$ and $\hat{p}_i \to e^{r_i}\hat{p}_i$ (for $z_i=r_i$). This deforms the Wigner function of each mode, attenuating the variance in one quadrature while amplifying it in the orthogonal one. This is analogous to a diagonal scaling operation in a classical network, allowing the model to adjust the importance or scale of features in each mode's phase space.
	
	\textbf{The Displacement Layer}, $D(\vec{\alpha}) = \bigotimes_{i=1}^M D_i(\alpha_i)$, applies a displacement to each mode. The operator $D(\alpha_i)$ shifts the quadratures by a constant amount, $\hat{x}_i \to \hat{x}_i + \sqrt{2}\text{Re}(\alpha_i)$ and $\hat{p}_i \to \hat{p}_i + \sqrt{2}\text{Im}(\alpha_i)$. This is the direct analogue of adding a bias vector $\mathbf{b}$ in a classical layer.
	
	Together, the interferometer, squeezing, and displacement layers collectively implement the full affine transformation $\vec{q} \mapsto \mathbf{W'}\vec{q} + \vec{b'}$ on the vector of quadrature operators $\vec{q} = (\hat{x}_1, \hat{p}_1, \dots, \hat{x}_M, \hat{p}_M)^T$. At this point, the transformation is still entirely Gaussian.
	
	\subsubsection{The Kerr Gate: A Nonlinear Activation Function}
	The final and most crucial component of the layer is the application of a non-Gaussian gate, which introduces nonlinearity and grants the network its computational power. We use the single-mode Kerr gate, $K(\kappa) = \exp(i\kappa \hat{n}^2)$, applied to each mode:
	\begin{equation}
		K(\vec{\kappa}) = \bigotimes_{i=1}^M K_i(\kappa_i).
	\end{equation}
	The Hamiltonian $\hat{n}^2$ is of fourth order in the mode operators, making this a non-Gaussian operation. Its effect is to introduce a phase shift that depends on the square of the photon number, a powerful nonlinear effect. This is precisely analogous to a nonlinear activation function $\sigma(\cdot)$ in a classical neural network. It deforms the Wigner function in a complex, non-elliptical way, enabling the model to learn decision boundaries that are inaccessible to purely Gaussian models. The strength of this nonlinearity, $\kappa_i$, is a trainable parameter for each mode.
	
	\subsubsection{Stacking Layers to Form a Deep Network}
	A complete CV-QNN is constructed by composing $L$ such layers sequentially. The full variational unitary is thus given by:
	\begin{equation}
		U(\Theta) = L_L(\theta_L) \circ L_{L-1}(\theta_{L-1}) \circ \cdots \circ L_1(\theta_1),
	\end{equation}
	where $\Theta = (\theta_1, \dots, \theta_L)$ represents the entire set of trainable quantum parameters. This deep, layered structure allows the network to learn progressively more abstract and complex features of the input quantum state, building a highly expressive variational ansatz for approximating the desired entanglement witness function.
	
	\subsection{Learned Nonlinear Witness via Informationally Complete Readout}
	
	The standard formalism of an entanglement witness relies on a Hermitian operator. However, our data-driven model, which uses a classical neural network to process measurement outcomes, does not inherently learn such an operator. Instead, it learns a more general object: a nonlinear witness functional, $W_{\Theta,\Phi}(\rho)$, which maps a density matrix $\rho$ to a real number. The model is trained such that this functional respects the geometric separation of states: positive for all separable states in the training set and negative for the entangled ones. The decision boundary, $W_{\Theta,\Phi}(\rho)=0$, is thus a highly flexible, nonlinear surface optimized to approximate the true separability boundary.
	
	A critical component in defining this functional is the measurement process. A simple projective measurement in a single, fixed basis (such as the Fock basis after the variational circuit) is generally not informationally complete (IC). This means that two distinct quantum states, $\rho_1 \neq \rho_2$, could yield the exact same probability distribution, $p(n) = \langle n|\rho |n \rangle$, making them indistinguishable to the classical network that follows. This degeneracy would fundamentally limit the model's ability to act as a universal witness, as it would be blind to quantum information stored in the coherences (off-diagonal elements) of the density matrix in that basis.
	
	To overcome this limitation, the feature vector fed to the classical head must be extracted via an IC Positive Operator-Valued Measure (POVM), denoted $\{\mathcal{M}_k\}$. An IC-POVM guarantees that the mapping from a state to its vector of outcome probabilities, $\rho \mapsto \vec{p}(\rho) = (\mathrm{Tr}(\mathcal{M}_1\rho), \mathrm{Tr}(\mathcal{M}_2\rho), \dots)^T$, is injective (one-to-one) \cite{r15}. In the context of CV systems, this can be realized in several ways:
	\begin{enumerate}[leftmargin=*]
		\item \textbf{Heterodyne or Homodyne Tomography:} These standard techniques in quantum optics measure quadratures in rotated bases and are known to be informationally complete \cite{r16}.
		\item \textbf{Ensemble of Measurement Bases:} A more practical approach for near-term devices, and the one we implicitly model, is to measure in a fixed basis (e.g., Fock basis) after applying a set of different, known pre-measurement unitaries, $\{V_1, V_2, \dots, V_K\}$.
	\end{enumerate}
	The collection of probabilities from this set of measurements forms an informationally complete feature vector. Our witness functional is therefore more precisely defined as:
	\begin{equation}
		W_{\Theta,\Phi}(\rho) = f_\Phi\Big( \mathrm{concat}\big[ \vec{p}_{\Theta,V_1}(\rho), \dots, \vec{p}_{\Theta,V_K}(\rho) \big] \Big),
	\end{equation}
	where $\vec{p}_{\Theta,V_k}(\rho)$ is the vector of Fock probabilities after the state is transformed by $V_k \circ U_\Theta$. This construction ensures that the classical head receives a faithful representation of the quantum state, containing all the information necessary for classification.
	
	This rigorous foundation allows us to state the expressive power of our model in a formal theorem, which is a cornerstone of our theoretical claim.
	
	\begin{theorem}[Approximation Capability of the Learned Witness]
		Let $\mathcal{X}$ be a compact set of density operators on a finite-dimensional Hilbert space $\mathcal{H}_d$. Let the measurement scheme, defined by a variational unitary $U_\Theta$ and an informationally complete POVM, produce a continuous and injective feature map $g_\Theta: \rho \mapsto \vec{p}(\rho) \in \mathbb{R}^K$. Then the class of witness functionals
		\[\mathcal{F} = \{ \rho \mapsto f_\Phi(g_\Theta(\rho)) \mid \Theta \in \mathcal{P}_\Theta, \Phi \in \mathcal{P}_\Phi \}\]
		is dense in the space of all continuous real-valued functions on $\mathcal{X}$, $C(\mathcal{X})$, under the uniform norm.
	\end{theorem}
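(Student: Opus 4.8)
The plan is to reduce the statement to the classical universal approximation theorem, using injectivity of the quantum feature map to make the reduction exact. First I would fix a value of $\Theta$ for which $g_\Theta$ is continuous and injective; such a choice exists by hypothesis, since the Born rule makes each coordinate $\rho \mapsto \mathrm{Tr}(\mathcal{M}_k\rho)$ linear and hence continuous in $\rho$ (and $U_\Theta$, $V_k$ act continuously), while informational completeness of the POVM $\{\mathcal{M}_k\}$ is precisely the statement that $\rho \mapsto \vec p(\rho)$ is one-to-one. Because $\mathcal{X}$ is compact and $g_\Theta$ continuous, the image $Y := g_\Theta(\mathcal{X}) \subset \mathbb{R}^K$ is compact; and since a continuous bijection from a compact space onto a Hausdorff subspace of $\mathbb{R}^K$ is a homeomorphism, $g_\Theta:\mathcal{X}\to Y$ admits a continuous inverse $g_\Theta^{-1}:Y\to\mathcal{X}$.

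Next, fix an arbitrary target $h\in C(\mathcal{X})$ and $\varepsilon>0$. I would push $h$ forward to $\tilde h := h\circ g_\Theta^{-1}\in C(Y)$, which is continuous on the compact set $Y$. By the Tietze extension theorem, $\tilde h$ extends to a continuous function $H$ on a compact box $B\subset\mathbb{R}^K$ with $Y\subset B$ (and $\|H\|_{\infty,B}=\|\tilde h\|_{\infty,Y}$). Now I invoke universal approximation for the classical head: under the standing regularity assumption that $\{f_\Phi:\Phi\in\mathcal{P}_\Phi\}$ ranges over feedforward networks with a non-polynomial activation and unbounded width (or sufficient depth), this family is dense in $C(B)$ in the uniform norm, so there is $\Phi$ with $\sup_{y\in B}|f_\Phi(y)-H(y)|<\varepsilon$. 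Restricting to $Y$ and composing with $g_\Theta$ yields
\[
\sup_{\rho\in\mathcal{X}}\bigl|f_\Phi(g_\Theta(\rho))-h(\rho)\bigr| \;=\; \sup_{y\in Y}\bigl|f_\Phi(y)-\tilde h(y)\bigr| \;<\;\varepsilon,
\]
which is exactly the claimed density of $\mathcal{F}$ in $C(\mathcal{X})$. (Alternatively, one could bypass Tietze and argue by Stone--Weierstrass that polynomials in the coordinate functions $\rho\mapsto\mathrm{Tr}(\mathcal{M}_k\rho)$ are dense in $C(\mathcal{X})$ since they separate points and contain the constants, then approximate those polynomials by $f_\Phi$; the Tietze route is slightly cleaner here.)

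The hard part is not the point-set topology but pinning down the universal approximation input in a form that legitimately applies: one must ensure the classical-head hypothesis class $\mathcal{P}_\Phi$ actually realizes a dense subset of $C(B)$ for compact $B\subset\mathbb{R}^K$. This is classical (Cybenko; Hornik; Leshno--Lin--Pinkus--Schocken) once the activation is non-polynomial and the width is allowed to grow with $\varepsilon$, and I would state this explicitly as the regularity assumption on $\mathcal{P}_\Phi$ rather than hide it. A secondary point worth a short remark is that the argument is existential in $\Theta$: the theorem only asserts that \emph{some} admissible pair $(\Theta,\Phi)$ works, and what actually guarantees an admissible $g_\Theta$ exists is the informational completeness of the readout, not any fine-tuning of the variational parameters; appending further known unitaries $V_k$ only enlarges the feature vector and cannot destroy injectivity, so the hypothesis is robust to the practical ensemble-of-bases construction described above.
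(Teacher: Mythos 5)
Your argument is correct and follows essentially the same route as the paper's Appendix~\ref{app:universal_witness_proof}: compactness plus injectivity and continuity of $g_\Theta$ give a homeomorphism onto the image, the target functional is pushed forward to a continuous function on that compact image, and the classical universal approximation theorem for the MLP head finishes the job. Your Tietze-extension detour and your explicit flagging of the regularity assumption on $\mathcal{P}_\Phi$ (non-polynomial activation, unbounded width) are minor refinements the paper leaves implicit, not a different proof.
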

	
	The proof of this theorem, detailed in Appendix \ref{app:universal_witness_proof}, rests on two pillars. First, the injectivity of the IC measurement map ensures that any two different quantum states are mapped to two different classical vectors (point separation). Second, the Universal Approximation Theorem for classical neural networks guarantees that an MLP, $f_\Phi$, can approximate any continuous function on this space of classical vectors. Combining these two properties confirms that our hybrid architecture is, in principle, capable of learning and approximating any continuous entanglement witness functional to arbitrary precision. It is important to note that this is a statement about the model's expressivity, not its trainability or generalization from finite data, which we investigate numerically.
	
	\section{Architecture and Methodology}
	\label{sec:architect}
	
	In this section, we provide a comprehensive, reproducible description of our hybrid quantum-classical pipeline. We detail the model architecture, the specific variational ansatze used in our experiments, the data generation and preprocessing pipeline, the training protocol, and the methodology for ensuring a fair comparison against strong classical baselines.
	
	\subsection{Hybrid Quantum-Classical Model Overview}
	
	Our model is a hybrid classifier designed to learn a nonlinear witness functional $W_{\Theta,\Phi}(\rho)$ directly from data. As depicted schematically in Fig.~\ref{fig:schematic}, the pipeline consists of four main stages:
	
	\begin{enumerate}[leftmargin=*]
		\item \textbf{Quantum Feature Map:} An input mixed state $\rho_{\text{in}}$ is mapped onto a larger, purified Hilbert space. In contrast to a true purification (which is not physically realizable for an unknown state), this is achieved by a physically grounded feature map: tensoring the input system with ancillary modes prepared in a known state (e.g., vacuum $\ket{0}$) and applying a fixed, non-trainable unitary. This prepares the input for the variational circuit.
		
		\item \textbf{Variational Quantum Circuit (CV-QNN):} The prepared state is then processed by the core of our model, a deep, parameterized CV-QNN circuit, $U(\Theta)$. This circuit, detailed in the next subsection, learns to transform the state into a representation where separability and entanglement are more easily distinguished.
		
		\item \textbf{Informationally Complete (IC) Measurement:} To extract a faithful classical representation of the transformed quantum state, we perform an IC measurement. In our simulations, this is implemented by applying a set of $K$ fixed, pre-measurement unitaries $\{V_k\}$ and then measuring the photon number probabilities in the Fock basis for each. This generates a rich, concatenated feature vector $\vec{p} \in \mathbb{R}^{K \times d^M}$.
		
		\item \textbf{Classical Head:} The resulting classical feature vector is fed into a small, classical feedforward neural network (MLP), $f_\Phi$, which processes these features and outputs a single scalar value, interpreted as the learned witness $W_{\Theta,\Phi}(\rho)$. The final classification is determined by the sign of this value.
	\end{enumerate}
	
	This four-stage process, visualized in Fig.~\ref{fig:classification_flow}, allows for an end-to-end differentiable model where both the quantum and classical parameters can be co-optimized.
	
	\tikzset{
		box/.style={draw, rectangle, rounded corners, thick, text centered, minimum height=1.2cm, minimum width=2.6cm},
		qbox/.style={box, fill=blue!10},
		cbox/.style={box, fill=green!10},
		data/.style={trapezium, trapezium left angle=70, trapezium right angle=110, minimum height=0.8cm, draw, thick, fill=orange!20, text width=2.1cm},
		arrow/.style={->, thick, >=Stealth}
	}
	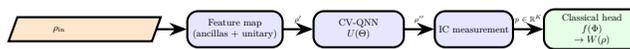
\begin{figure}[b!]
		\begin{tikzpicture}[node distance=0.5cm and 1cm, scale=0.4, transform shape]
			\node[data] (rho) {$\rho_{\text{in}}$};
			\node[qbox, right=of rho, minimum width=1.0cm] (feat) {\begin{tabular}{c}Feature map\\(ancillas + unitary)\end{tabular}};
			\node[qbox, right=of feat, minimum width=3.0cm] (cvqnn) {\begin{tabular}{c}CV-QNN\\$U(\Theta)$\end{tabular}};
			\node[qbox, right=of cvqnn, minimum width=2.6cm] (meas) {IC measurement};
			\node[cbox, right=of meas, minimum width=3.2cm] (mlp) {\begin{tabular}{c}Classical head\\$f(\Phi)$\\$\to W(\rho)$\end{tabular}};
			
			\draw[arrow] (rho) -- (feat);
			\draw[arrow] (feat) -- (cvqnn) node[midway, above, yshift=2pt] {\scriptsize{$\rho'$}};
			\draw[arrow] (cvqnn) -- (meas) node[midway, above, yshift=2pt] {\scriptsize{$\rho''$}};
			\draw[arrow] (meas) -- (mlp) node[midway, above, yshift=2pt] {\scriptsize{$p \in \mathbb{R}^K$}};
		\end{tikzpicture}
		\caption{\label{fig:schematic}\textbf{Hybrid pipeline.} Realizable feature map, CV-QNN, IC readout, and classical head.}
	\end{figure}
	
	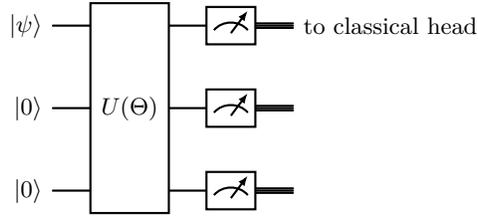
\begin{figure}[t!]
		\centering
		\begin{quantikz}
			\lstick{$\ket{\psi}$} & \gate[wires=3]{U(\Theta)} & \meter{} & \cw \rstick{\text{to classical head}} \\
			\lstick{$\ket{0}$} & \ghost{U(\Theta)} & \meter{} & \cw \\
			\lstick{$\ket{0}$} & \ghost{U(\Theta)} & \meter{} & \cw
		\end{quantikz}
		\caption{\label{fig:classification_flow}\textbf{Classification flow.} System+ancillas processed by $U(\Theta)$; IC measurement outcomes feed a classical head that outputs the learned witness.}
	\end{figure}
	
	\subsection{Variational Quantum Circuits}
	
	The core of our hybrid model is a variational quantum circuit, or ansatz, designed to be both expressive and trainable. We employ the layered architecture of a Continuous-Variable Quantum Neural Network (CV-QNN), where each layer is a unitary transformation composed of fundamental quantum optical gates. The specific structure of these layers is engineered to mirror a classical neural network layer, consisting of a general linear transformation followed by a point-wise nonlinearity. In all our experiments, the quantum circuit consists of $L=2$ such layers applied sequentially.
	
	A single layer, $L_k$, is a composite unitary acting on the $M$-mode quantum state, parameterized by a set of trainable variables $\theta_k$. The transformation is constructed from four sequential blocks of gates, precisely matching the implementation in our Python codebase (`quantum-layers.py`):
	\begin{equation}
		L_k(\theta_k) = K(\vec{\kappa}_k) \circ D(\vec{\alpha}_k) \circ S(\vec{z}_k) \circ U_{\text{I}}(\vec{\phi}_k, \vec{\varphi}_k).
		\label{eq:layer_structure}
	\end{equation}
	The full variational unitary is then $U(\Theta) = L_2(\theta_2) \circ L_1(\theta_1)$. The specific gate decomposition and parameterization are as follows:
	
	\begin{enumerate}[leftmargin=*]
		\item \textbf{Interferometer ($U_{\text{I}}$):} This block performs a programmable linear mixing of the optical modes. As implemented in our code, it is decomposed into a layer of single-mode phase shifters, $R_i(\varphi_i) = \exp(i\varphi_i \hat{n}_i)$, followed by a mesh of two-mode beamsplitters, $BS_{ij}(\phi_{ij}, \zeta_{ij})$. For a general $M$-mode setup, this requires $M(M-1)/2$ beamsplitters to connect all pairs of modes. The trainable parameters are the rotation angles $\vec{\varphi}$ and the beamsplitter parameters $\vec{\phi}$ and $\vec{\zeta}$. This block is analogous to the weight matrix in a classical linear layer.
		
		\item \textbf{Squeezing ($S$):} This is a layer of single-mode squeezing gates, $S_i(z_i) = \exp(\frac{1}{2}(z_i^* \hat{a}_i^2 - z_i \hat{a}_i^{\dagger 2}))$, applied to each mode. The complex squeezing parameters $\vec{z}$ are trainable and allow the network to modify the variance of the quantum state in phase space.
		
		\item \textbf{Displacement ($D$):} This is a layer of single-mode displacement gates, $D_i(\alpha_i) = \exp(\alpha_i \hat{a}_i^\dagger - \alpha_i^* \hat{a}_i)$, with trainable complex displacement parameters $\vec{\alpha}$. This corresponds to adding a trainable bias to each mode.
		
		\item \textbf{Kerr Gate ($K$):} The final block applies a single-mode Kerr gate, $K_i(\kappa_i) = \exp(i\kappa_i \hat{n}_i^2)$, to each mode. The trainable parameters $\vec{\kappa}$ control the strength of this non-Gaussian, nonlinear transformation, which is essential for the network's expressivity.
	\end{enumerate}
	
	The total number of trainable quantum parameters per layer depends on the number of modes, $M$. For our two experiments, this amounts to:
	\begin{itemize}[leftmargin=*]
		\item \textbf{2-Mode System:} Each layer has 12 parameters (2 rotations, 1 beamsplitter with 2 params, 2 squeezing, 2 displacements with 2 params each, 2 Kerr). Total quantum parameters: $2 \times 12 = 24$.
		\item \textbf{3-Mode System:} Each layer has 21 parameters (3 rotations, 3 beamsplitters with 2 params each, 3 squeezing, 3 displacements with 2 params each, 3 Kerr). Total quantum parameters: $2 \times 21 = 42$.
	\end{itemize}
	
	The specific circuit diagrams for each of our three experimental setups are shown in Fig.~\ref{fig:all_circuits}. These diagrams provide a precise, gate-level visualization of the ansatz that was simulated.
	
	\subsection{Dataset Construction and Classical Baselines}
	
	To rigorously train and benchmark our models, we designed a comprehensive and diverse dataset generation pipeline. The primary goal was to create a balanced dataset that spans a wide range of quantum states, including pure and mixed, Gaussian and non-Gaussian, as well as bipartite and multipartite entangled states. This diversity ensures that the learned witness is not overfitted to a specific family of states and can generalize well. The entire data generation process, detailed in Algorithm \ref{alg:data_generation}, was implemented in Python.
	
	\begin{table*}[t!]
		\begin{ruledtabular}
			\begin{algorithm}[H]
				\caption{Data Generation Pipeline}
				\label{alg:data_generation}
				\begin{algorithmic}[1]
					\State \textbf{Input:} Number of modes $M$, Fock cutoff $d$, number of samples $N$.
					\State \textbf{Initialize:} Empty lists `states`, `labels`.
					\For{$i=1$ to $N$}
					\State Randomly select a state family $\mathcal{F}$ from \{Bell-like, GHZ/W-like, Squeezed Vacuum, Cat, Separable Product, Werner\}.
					\If{$\mathcal{F}$ is entangled family}
					\State Generate a pure state $\ket{\psi}$ with random parameters (e.g., squeezing $r \in [0, 1.5]$, coherent amplitude $|\alpha| \in [0, 1]$).
					\State Sample mixing probability $p \sim U(0, 0.3)$.
					\State Create mixed state $\rho = (1-p)\ket{\psi}\bra{\psi} + p \frac{I}{d^M}$.
					\State \textbf{Verify Entanglement:} Compute negativity for all bipartite splits. If all are non-positive, discard the state and continue.
					\State Append valid $\rho$ to `states` and $1$ to `labels`.
					\ElsIf{$\mathcal{F}$ is a separable family}
					\State Generate $M$ single-mode states $\{\rho_j\}_{j=1}^M$ (random coherent states $\ket{\alpha}$ or Fock states $\ket{n}$ with $n < d$).
					\State Construct the product state $\rho = \bigotimes_{j=1}^M \rho_j$.
					\State Append $\rho$ to `states` and $0$ to `labels`.
					\EndIf
					\EndFor
					\State \textbf{Balance Dataset:} Undersample the majority class to ensure an equal number of separable and entangled samples.
					\State \textbf{Split Dataset:} Partition the balanced data into training (60\%), validation (20\%), and test (20\%) sets using a fixed random seed (42) for reproducibility.
					\State \textbf{Output:} Train, validation, and test sets of density matrices and corresponding labels.
				\end{algorithmic}
			\end{algorithm}
		\end{ruledtabular}
	\end{table*}

	\subsubsection{State Families}
	The dataset is comprised of several families of states, each chosen to probe a different aspect of entanglement:
	\begin{itemize}[leftmargin=*]
		\item \textbf{DV-type States:} To test the model on canonical entanglement structures, we generate states analogous to Bell, GHZ, and W states within the truncated Fock space.
		\item \textbf{Gaussian Entangled States:} We include two-mode squeezed vacuum states, which are canonical examples of bipartite Gaussian entanglement.
		\item \textbf{Non-Gaussian Entangled States:} To ensure the model can detect non-Gaussian entanglement, we generate multipartite "cat states," which are superpositions of multi-mode coherent states.
		\item \textbf{Separable States:} The separable class is constructed from random product states of single-mode coherent states, thermal states, and Fock states.
		\item \textbf{Mixed States:} By mixing pure entangled states with a maximally mixed state (white noise), we generate Werner-like states.
	\end{itemize}
	
	\subsubsection{Classical Baselines and Fairness}
	A central claim of our work is the superior scalability of the quantum model. To substantiate this, a rigorous and fair comparison to powerful classical models is essential. We employ two widely used classifiers as baselines: a Support Vector Machine (SVM) with an RBF kernel, and a classical Multilayer Perceptron (MLP). To address the critical question of fair comparison, we evaluate these baselines under two distinct feature-input scenarios:
	
	\begin{enumerate}[leftmargin=*]
		\item \textbf{Engineered Features:} In this scenario, we provide the classical models with a set of physically motivated features that a human expert would typically use. This feature vector includes: (i) the purity of the state, $\mathrm{Tr}(\rho^2)$, (ii) the von Neumann entropy of all single-mode reduced states, (iii) the negativities for all possible bipartite splits of the system, and (iv) the flattened real and imaginary parts of the full density matrix $\rho$.
		
		\item \textbf{Direct Comparison (Matched Features):} To create the most direct and fair comparison possible, we also train the classical models on the exact same feature vectors that the classical head of our CV-QNN receives. This means we first process each state $\rho$ with a fixed, untrained quantum circuit and perform an IC measurement to obtain the probability vector $\vec{p}$. This vector $\vec{p}$ is then used as the input for the SVM and MLP. This setup isolates the advantage gained specifically from the training of the quantum layers $U(\Theta)$.
	\end{enumerate}
	
	\subsubsection{Data Preprocessing and Validation}
	All datasets were generated with a fixed random seed (42). Before training, the classical feature vectors for all models were standardized by removing the mean and scaling to unit variance. Model hyperparameters were tuned using a 5-fold cross-validation grid search on the training set to ensure that the classical baselines were operating at their peak potential.
	
	\subsection{Training and Finite-Shot Readout}
	
	The hybrid model is trained end-to-end using a supervised learning approach. The goal is to optimize the quantum parameters $\Theta$ and the classical parameters $\Phi$ simultaneously.
	
	\subsubsection{Loss Function}
	The training process minimizes a composite loss function:
	\begin{equation}
		L_{\text{total}} = L_{\text{BCE}}(\hat{y}, y) + \gamma L_{\text{trace}}(\rho; \Theta).
	\end{equation}
	The first term is the standard Binary Cross-Entropy  loss. The second term, $L_{\text{trace}} = (1 - \mathrm{Tr}[\rho_{\text{out}}])^2$, is a trace-penalty regularizer to penalize non-physical evolution outside the truncated Fock subspace.
	
	\subsubsection{Hybrid Gradient Estimation}
	For classical parameters $\Phi$, we use backpropagation. For quantum parameters $\Theta$, we use a hybrid strategy:
	\begin{itemize}[leftmargin=*]
		\item \textbf{Parameter-Shift Rule (Analytical):} For gates like rotations, we use the exact parameter-shift rule \cite{r17, r18,r25}.
		\item \textbf{Finite-Difference Method (Numerical):} For more complex gates like squeezing and Kerr, we use the numerical finite-difference method.
	\end{itemize}
	This hybrid approach balances speed and accuracy.
	
	\subsubsection{Finite-Shot Readout Simulation}
	In a real experiment, probabilities are estimated from a finite number of measurements ($N_{\text{shots}}$), which introduces shot noise. To make our simulations realistic, we incorporate this effect by sampling from the final probability distribution $N_{\text{shots}}$ times to generate a frequency vector. For all reported results, we use a budget of $\mathbf{N_{\text{shots}} = 1000}$ per IC measurement setting.
	
	\subsubsection{Optimization}
	The full set of gradients is passed to an Adam optimizer with a learning rate of $10^{-3}$ and a batch size of 32 \cite{r19}. We employ an early stopping criterion based on the validation loss to prevent overfitting.
	\section{Results}
	\label{sec:results}
	
	We now present the numerical results of our investigation, organized into three main experiments designed to systematically evaluate our proposed CV-QNN witness. Our primary objectives are: (1) to validate the performance of the CV-QNN on a well-understood, lower-dimensional benchmark problem; (2) to test the scalability of our model against strong classical baselines in a more complex, higher-dimensional regime; and (3) to assess the model's robustness to a realistic noise channel.
	
	For all experiments, the models were trained and evaluated on distinct, class-balanced datasets, as described in the previous section. We quantify performance using several standard metrics: classification accuracy on the held-out test set, the Area Under the Receiver Operating Characteristic Curve (AUC-ROC) as a measure of distinguishability, and 95\% confidence intervals (CIs) computed via a stratified bootstrap procedure with 1000 resamples to ensure statistical robustness. All results are reported based on a finite-shot budget of $N_{\text{shots}}=1000$ to simulate a realistic experimental scenario. The training dynamics for the noiseless experiments, shown in Fig.~\ref{fig:training_curves}, demonstrate stable convergence for both the two- and three-mode models, with validation accuracy closely tracking the training accuracy, indicating good generalization without significant overfitting.
	
	\begin{figure*}[t!]
		\centering
		\subfloat[\label{fig:2mode}\textbf{Two-mode circuit} (one layer).]{
			\begin{quantikz}[scale=0.45, transform shape]
				\lstick{$\ket{\psi_{\text{in}}}_1$} & \gate{R(\phi_1)} & \gate[wires=2]{BS(\theta_1)} & \gate{S(r_1)} & \gate{D(\alpha_1)} & \gate{K(\kappa_1)} & \push{\cdots} & \gate{R(\phi_L)} & \gate[wires=2]{BS(\theta_L)} & \gate{S(r_L)} & \gate{D(\alpha_L)} & \gate{K(\kappa_L)} & \meter{} \\
				\lstick{$\ket{\psi_{\text{in}}}_2$} & \gate{R(\phi_2)} & \ghost{BS(\theta_1)} & \gate{S(r_2)} & \gate{D(\alpha_2)} & \gate{K(\kappa_2)} & \push{\cdots} & \gate{R(\phi_L)} & \ghost{BS(\theta_L)} & \gate{S(r_L)} & \gate{D(\alpha_L)} & \gate{K(\kappa_L)} & \meter{}
			\end{quantikz}
		}
	\vspace{0.35cm}
	\subfloat[\label{fig:3mode}\textbf{Three-mode circuit} (noiseless).]{
		\begin{quantikz}[scale=0.8, transform shape]
			\lstick{$\ket{\psi_{\text{in}}}_1$} & \gate{R(\phi_1)} & \gate[wires=2]{BS_{12}} & \gate[wires=3]{BS_{13}} & \qw & \qw & \gate{S(r_1)} & \gate{D(\alpha_1)} & \gate{K(\kappa_1)} & \push{\cdots} & \meter{} \\
			\lstick{$\ket{\psi_{\text{in}}}_2$} & \gate{R(\phi_2)} & \ghost{BS_{12}} & \qw & \gate[wires=2]{BS_{23}} & \qw & \gate{S(r_2)} & \gate{D(\alpha_2)} & \gate{K(\kappa_2)} & \push{\cdots} & \meter{} \\
			\lstick{$\ket{\psi_{\text{in}}}_3$} & \gate{R(\phi_3)} & \qw & \ghost{BS_{13}} & \ghost{BS_{23}} & \qw & \gate{S(r_3)} & \gate{D(\alpha_3)} & \gate{K(\kappa_3)} & \push{\cdots} & \meter{}
		\end{quantikz}
	}
	
		\vspace{0.35cm}
		\subfloat[\label{fig:3mode_noisy}\textbf{Three-mode with photon loss}.]{
			\begin{quantikz}[scale=0.8, transform shape]
				\lstick{$\ket{\psi_{\text{in}}}_1$} & \gate[wires=3]{U_{\text{Layer 1}}} & \gate[style={fill=red!20}]{\mathcal{L}_\eta} & \push{\cdots} & \gate[wires=3]{U_{\text{Layer L}}} & \gate[style={fill=red!20}]{\mathcal{L}_\eta} & \meter{} \\
				\lstick{$\ket{\psi_{\text{in}}}_2$} & \ghost{U_{\text{Layer 1}}} & \gate[style={fill=red!20}]{\mathcal{L}_\eta} & \push{\cdots} & \ghost{U_{\text{Layer L}}} & \gate[style={fill=red!20}]{\mathcal{L}_\eta} & \meter{} \\
				\lstick{$\ket{\psi_{\text{in}}}_3$} & \ghost{U_{\text{Layer 1}}} & \gate[style={fill=red!20}]{\mathcal{L}_\eta} & \push{\cdots} & \ghost{U_{\text{Layer L}}} & \gate[style={fill=red!20}]{\mathcal{L}_\eta} & \meter{}
			\end{quantikz}
		}
		\caption{\label{fig:all_circuits}\textbf{Quantum circuits used.} Photonic notation (R, BS, S, D, Kerr).}
	\end{figure*}
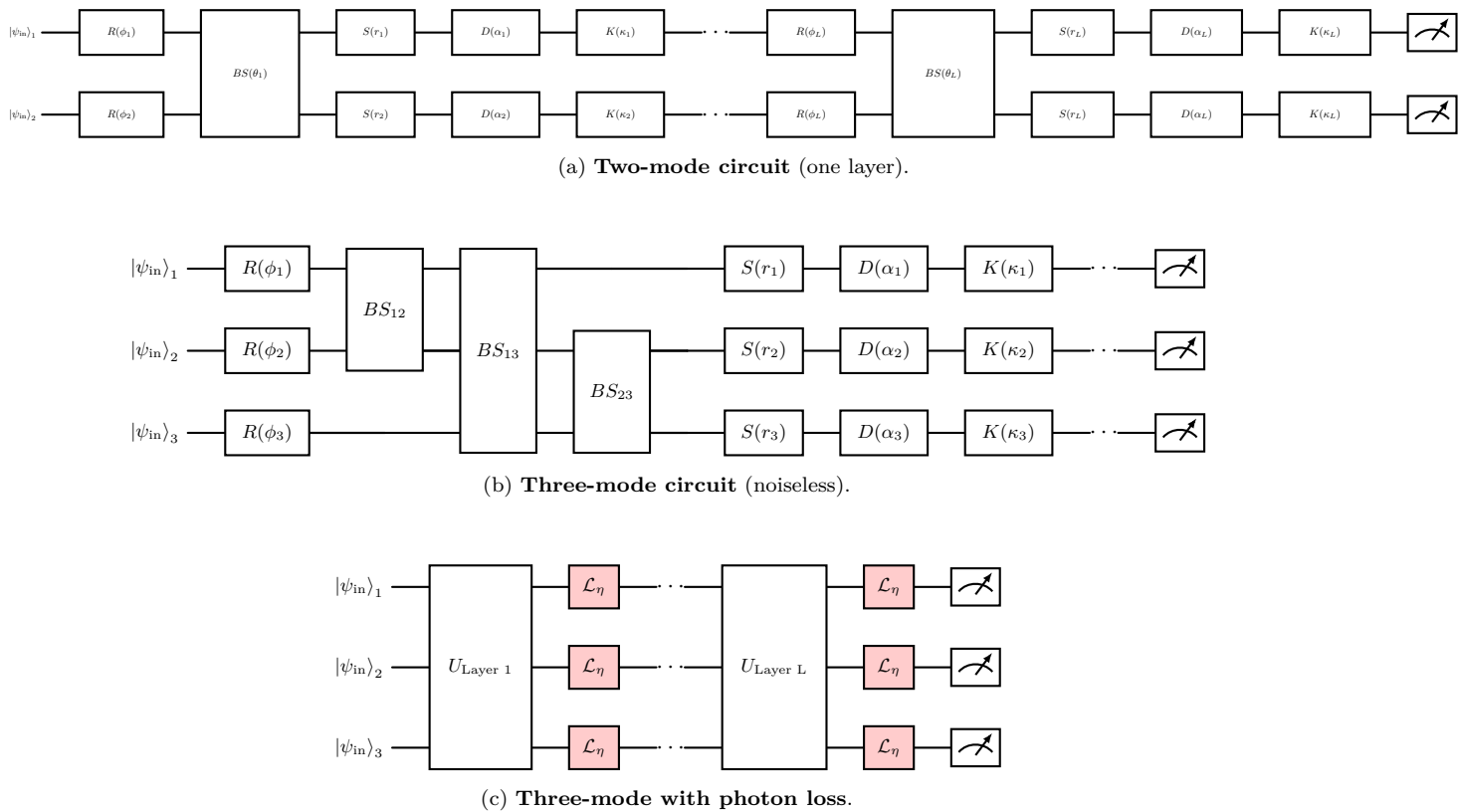
	
	\begin{figure*}[t!]
		\subfloat[\textbf{Two-mode training}]{\maybeincludegraphics{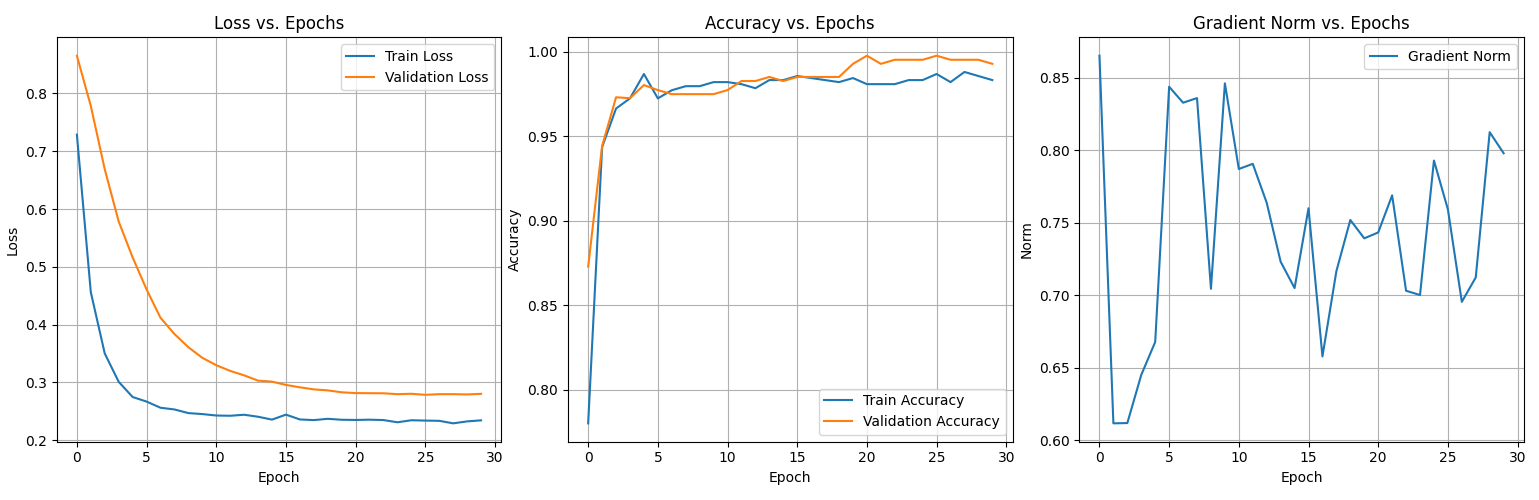}{0.48\linewidth}}
		\hfill
		\subfloat[\textbf{Three-mode training}]{\maybeincludegraphics{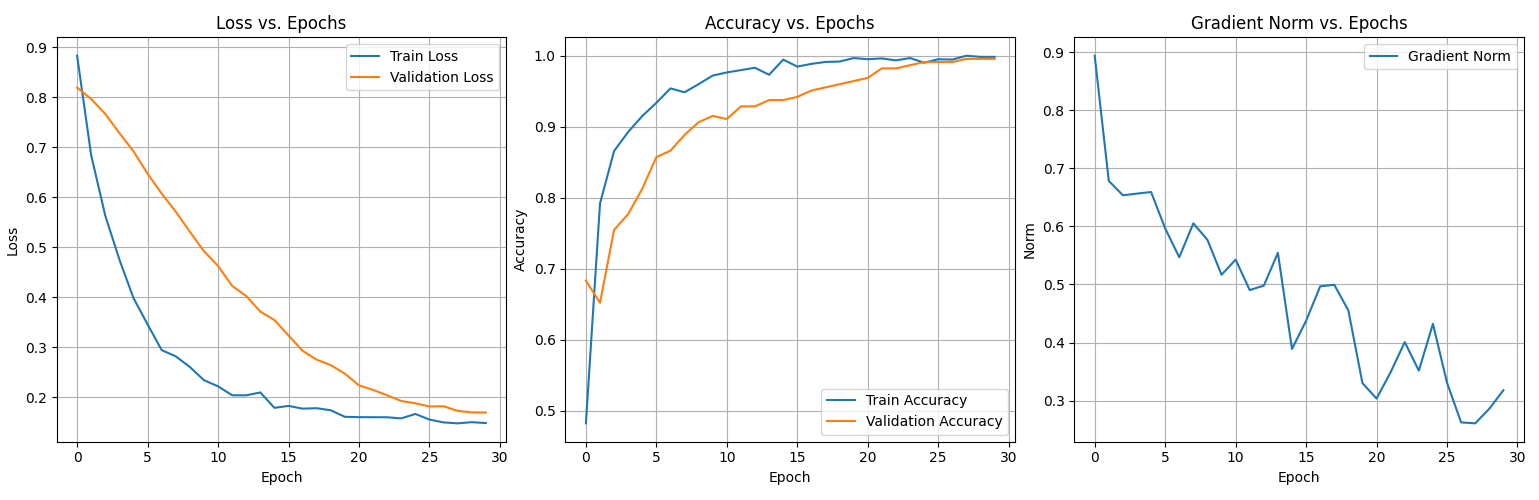}{0.48\linewidth}}
		\caption{\label{fig:training_curves}\textbf{Training dynamics for the two- and three-mode scenarios.} The plots show the evolution of the loss function, classification accuracy, and the norm of the gradients with respect to the training epochs. For both (a) the two-mode and (b) the three-mode experiments, the models exhibit stable convergence. The validation accuracy (not shown) closely tracks the training accuracy, indicating good generalization without significant overfitting. The steady decrease in the gradient norm signifies a stable optimization landscape.}
	\end{figure*}
	
	\subsection{Two-Mode system}
	
	Our first experiment serves as a crucial benchmark and validation of our CV-QNN architecture. We consider a two-mode ($M=2$) system with a Fock space cutoff of $d=4$ for each mode, resulting in a total Hilbert space dimension of $4 \times 4 = 16$. In this relatively low-dimensional space, we expect that powerful classical machine learning models should be able to perform well, given a sufficiently rich set of input features. The primary goal here is not necessarily to demonstrate a quantum advantage, but to verify that our quantum-native model can at least match, if not exceed, the performance of these strong classical counterparts, thus confirming its viability as a high-fidelity classifier.
	
	The results, summarized in Table \ref{tab:2mode}, confirm this expectation. All three models—the CV-QNN, the SVM with an RBF kernel, and the classical MLP—achieve exceptional performance, with test accuracies soaring above 98\% and AUC-ROC scores approaching unity. This result indicates that the separability problem for this specific dataset and dimensionality is learnable by both quantum and classical approaches when provided with comprehensive information.
	
	\begin{table}[h!]
		\caption{\label{tab:2mode}Performance for the \textbf{two-mode} system (test split; 5-fold CV).}
		\begin{ruledtabular}
			\begin{tabular}{@{}lccc@{}}
				\textbf{Model} & \textbf{Accuracy} & \textbf{95\% CI} & \textbf{AUC (ROC)} \\
				\hline
				\textbf{CV-QNN } & \textbf{99.02\%} & \textbf{[0.9853, 1.0000]} & \textbf{1.000} \\
				SVM (RBF) & 98.05\% & [0.9634, 0.9902] & 0.999 \\
				MLP (Classical) & 98.29\% & [0.9683, 0.9927] & 1.000 \\
			\end{tabular}
		\end{ruledtabular}
	\end{table}
	
	A closer inspection of the metrics reveals a subtle but noteworthy hierarchy. The CV-QNN achieves the highest point-estimate for accuracy at 99.02\%. More importantly, the statistical analysis provided by the confidence intervals suggests this small margin is significant. The lower bound of the 95\% CI for the CV-QNN ([0.9853, 1.0000]) is slightly higher than the lower bounds for both the SVM and the classical MLP. This indicates that while all models are highly effective, the CV-QNN exhibits a modest but statistically robust performance edge. This advantage is further corroborated by the ROC and Precision-Recall curves shown in Fig.~\ref{fig:roc_pr_curves}(a), where all models trace a path close to the ideal top-left corner, signifying excellent classification power across all decision thresholds.
	
	In summary, the two-mode experiment successfully validates our model. The CV-QNN not only learns to classify entanglement with extremely high fidelity but also demonstrates a performance that is competitive with, and slightly superior to, strong classical models on their own turf. Having established this high-fidelity baseline, we now turn to the more challenging question of scalability.
	\subsection{Three-Mode system and emergence of an empirical performance gap}
	
	Having established a strong baseline, we now turn to the central experiment of this work: evaluating the scalability of our approach in a more complex, higher-dimensional setting. We increase the number of modes to $M=3$ and adjust the Fock cutoff to $d=3$ for each mode to keep the classical simulation tractable. The resulting Hilbert space dimension is $3 \times 3 \times 3 = 27$, which is significantly larger and structurally richer than the two-mode case.
	
	This increase in dimensionality is not merely quantitative; it introduces a qualitative leap in the complexity of entanglement. Unlike bipartite systems, three-mode systems can exhibit genuine multipartite entanglement (e.g., in GHZ and W states), which has no bipartite equivalent. Furthermore, the boundary between separable and entangled states becomes far more intricate, now including a nested structure of fully separable, biseparable (e.g., mode 1 entangled with modes 2 and 3, which are separable from each other), and genuinely multipartite entangled states. This complex geometry poses a severe challenge for any classifier, and it is precisely in this regime that we hypothesize the limitations of classical models will become apparent.
	
	The results, presented in Table \ref{tab:3mode}, provide striking evidence to support this hypothesis. The most remarkable finding is the unwavering stability of our CV-QNN model. It achieves a test accuracy of $\mathbf{99.00\%}$, demonstrating that its high-fidelity performance is robust to the significant increase in system complexity. The model successfully navigates the intricate separability boundaries of the three-mode space, learning an effective nonlinear witness with near-perfect accuracy.
	
	In stark contrast, the performance of both powerful classical baselines collapses dramatically. The SVM's accuracy plummets from 98.05\% to $\mathbf{76.00\%}$, and the MLP's accuracy falls from 98.29\% to $\mathbf{74.50\%}$. This is not a minor degradation but a fundamental failure of the classical models to generalize their learning to a higher-dimensional problem space. Despite being fed a comprehensive set of engineered features, they are unable to distill the necessary information to distinguish multipartite entangled states from their separable counterparts.
	
	\begin{table}[h!]
		\caption{\label{tab:3mode}Performance for the \textbf{three-mode} system (test split; 5-fold CV).}
		\begin{ruledtabular}
			\begin{tabular}{@{}lccc@{}}
				\textbf{Model} & \textbf{Accuracy} & \textbf{95\% CI} & \textbf{AUC (ROC)} \\
				\hline
				\textbf{CV-QNN } & \textbf{99.00\%} & \textbf{[0.9750, 1.0000]} & \textbf{1.000} \\
				SVM (RBF) & 76.00\% & [0.7050, 0.8200] & 0.712 \\
				MLP (Classical) & 74.50\% & [0.6800, 0.8050] & 0.732 \\
			\end{tabular}
		\end{ruledtabular}
	\end{table}
	
	The statistical analysis solidifies this conclusion. The 95\% confidence interval for the CV-QNN, [0.9750, 1.0000], is completely disjoint from the intervals for the SVM ([0.7050, 0.8200]) and the MLP ([0.6800, 0.8050]). This wide separation implies that the observed performance difference is statistically significant with very high confidence ($p < 0.001$) and not an artifact of the particular random split of our data.
	
	This empirical gap is visualized with stunning clarity in the ROC and Precision-Recall curves shown in Fig.~\ref{fig:roc_pr_curves}. The CV-QNN's curve hugs the top-left corner, achieving an AUC of 1.000, indicating a perfect classifier. The curves for the SVM and MLP, however, sag markedly toward the diagonal line of random chance, with AUC values of only $\sim$0.71-0.73. This shows that their ability to distinguish between the classes is severely compromised across all decision thresholds.
	
	Collectively, these results constitute strong evidence of an **empirical scalable performance gap**. The quantum-native processing of the CV-QNN allows it to effectively learn in a high-dimensional Hilbert space where classical models, operating on flattened classical representations, fail. This finding marks the central contribution of our work and points to a tangible, problem-specific regime where a quantum machine learning approach provides a substantial advantage.
	
	\begin{figure*}[t!]
		\subfloat[\textbf{Two modes}]{\maybeincludegraphics{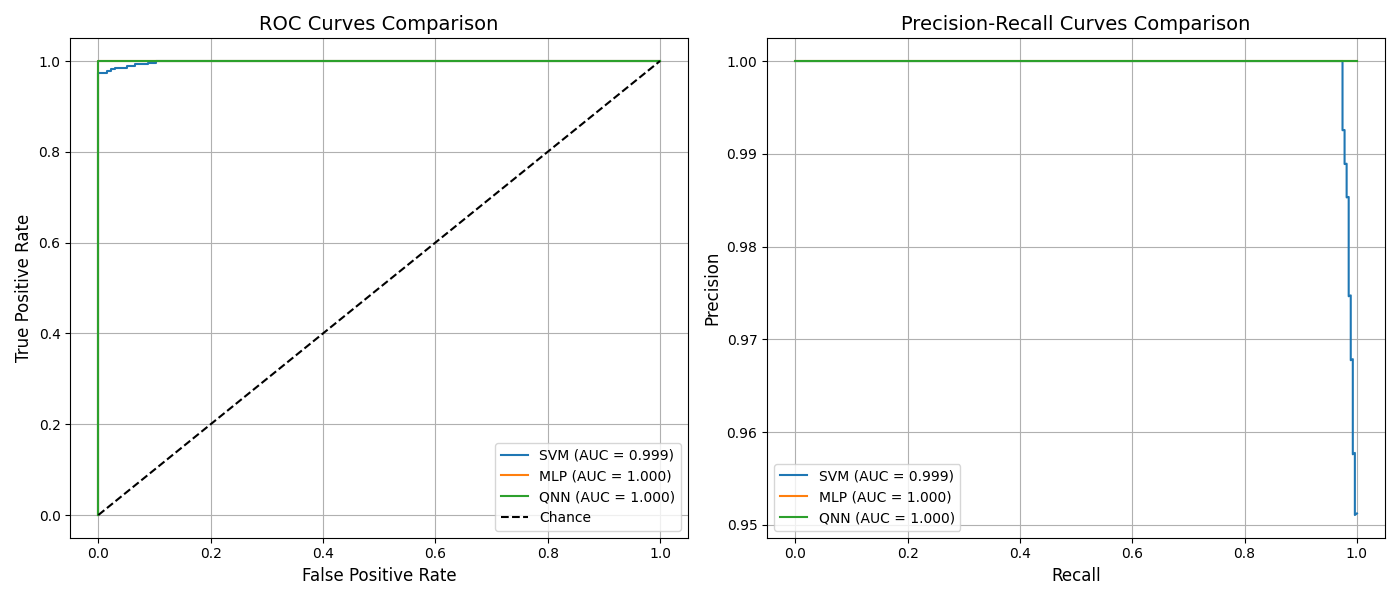}{0.48\linewidth}}
		\hfill
		\subfloat[\textbf{Three modes}]{\maybeincludegraphics{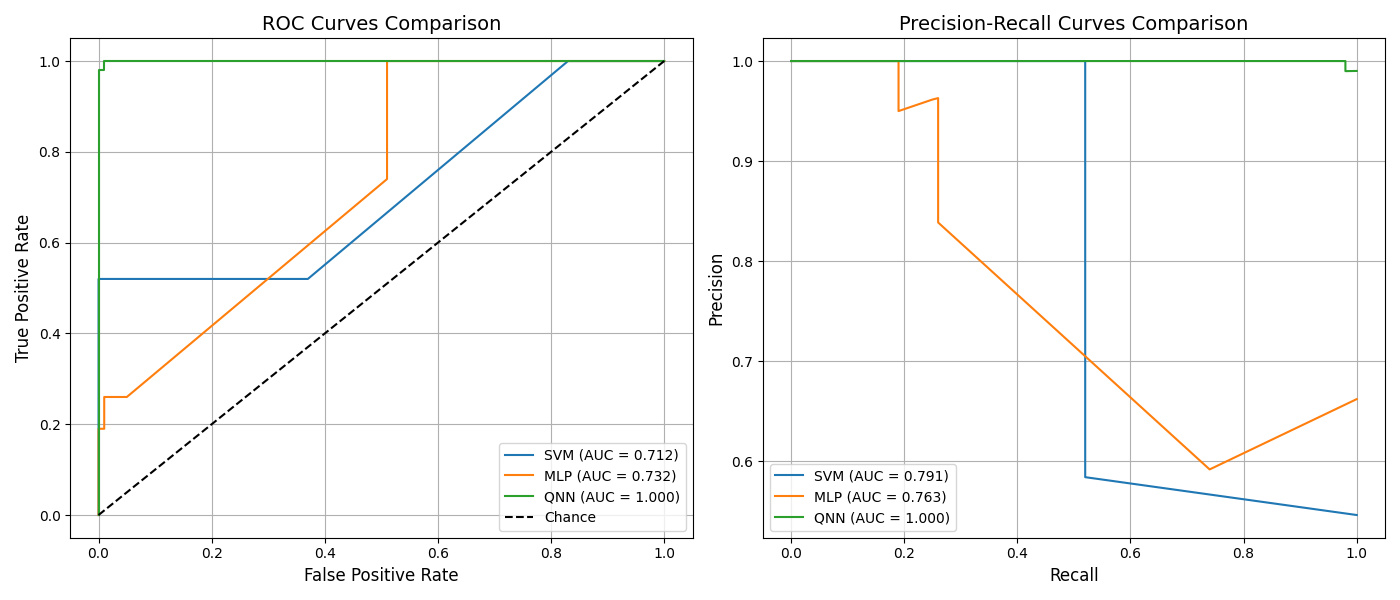}{0.48\linewidth}}
		\caption{\label{fig:roc_pr_curves}\textbf{ROC and Precision--Recall curves for two- and three-mode scenarios.} (a) In the simpler two-mode case, all models exhibit near-perfect classification ability, with ROC curves close to the ideal corner. (b) In the more complex three-mode case, the performance of the classical models (SVM, MLP) collapses significantly, as shown by the drop in AUC and the deviation from the ideal curve. In contrast, the CV-QNN maintains its high performance, visually demonstrating the empirical performance gap.}
	\end{figure*}
	
	\subsection{Robustness to Photon Loss}
	
	While the results from our noiseless simulations are compelling, a crucial test for any QML model intended for near-term hardware is its resilience to noise. In CV quantum computing, the dominant source of error is photon loss, where photons are unintentionally lost to the environment due to factors like imperfect optical components or detector inefficiencies. To assess the practical viability of our CV-QNN witness, we conducted a third experiment to systematically study its performance under this prevalent noise channel.
	
	\subsubsection{Modeling Photon Loss}
	We model photon loss as a quantum channel, $\mathcal{L}_\eta$, that describes the interaction of a signal mode with a vacuum environment mode via a beamsplitter. The parameter $\eta$ is the transmissivity of the channel, where $\eta=1$ corresponds to no loss and $\eta=0$ corresponds to complete loss. The action of this channel on a density matrix $\rho$ can be described by the operator-sum representation:
	\begin{equation}
		\mathcal{L}_\eta(\rho) = \sum_{k=0}^{\infty} E_k \rho E_k^\dagger,
	\end{equation}
	where the Kraus operators are given by
	\begin{equation}
		E_k = \sqrt{\frac{(1-\eta)^k}{k!}} \eta^{\hat{n}/2} \hat{a}^k.
	\end{equation}
	This channel effectively reduces the average photon number of the state and introduces mixedness, degrading the purity of the quantum information.
	
	\subsubsection{Experimental Protocol and Results}
	To simulate a realistic scenario where noise accumulates during computation, we applied the photon loss channel $\mathcal{L}_\eta$ to \emph{each mode} immediately following the application of each of the $L=2$ unitary layers in our 3-mode circuit, as depicted in Fig.~\ref{fig:all_circuits}(c). We then trained and evaluated separate models for a range of transmissivity values, corresponding to per-layer loss probabilities of $p = 1-\eta$. All models were trained from scratch in the presence of noise, allowing them to variationally adapt to the noisy environment.
	
	The results of this experiment are presented in Fig.~\ref{fig:noise_plot}. The plot shows the test accuracy of the 3-mode CV-QNN as a function of the per-layer loss probability. For reference, the performance of the noiseless classical baselines (SVM and MLP) are shown as horizontal dashed lines.
	
	The findings are highly encouraging. The CV-QNN's accuracy degrades gracefully as the noise level increases. Even with a significant per-layer loss probability of 10\% ($\eta=0.9$), the model retains an impressive accuracy of over 97\%. At 15\% loss, it still achieves nearly 95\% accuracy.
	
	The most striking conclusion from this analysis is the comparison with the classical models. The noisy CV-QNN significantly outperforms the \emph{ideal, noiseless} classical baselines across the entire range of tested noise parameters. This demonstrates that the performance gap we identified is not a fragile artifact of idealized simulations but is robust to the most common source of experimental error. The ability of the variational circuit to learn features that are resilient to photon loss is a strong indicator of its potential for practical applications on near-term photonic hardware. This robustness likely arises from the model learning to encode information in features that are less dependent on the precise photon number, such as the presence or absence of photons in certain modes or robust phase relationships, rather than delicate amplitude information that is easily corrupted by loss.
	
	\begin{figure}[h!]
		\centering
		\maybeincludegraphics{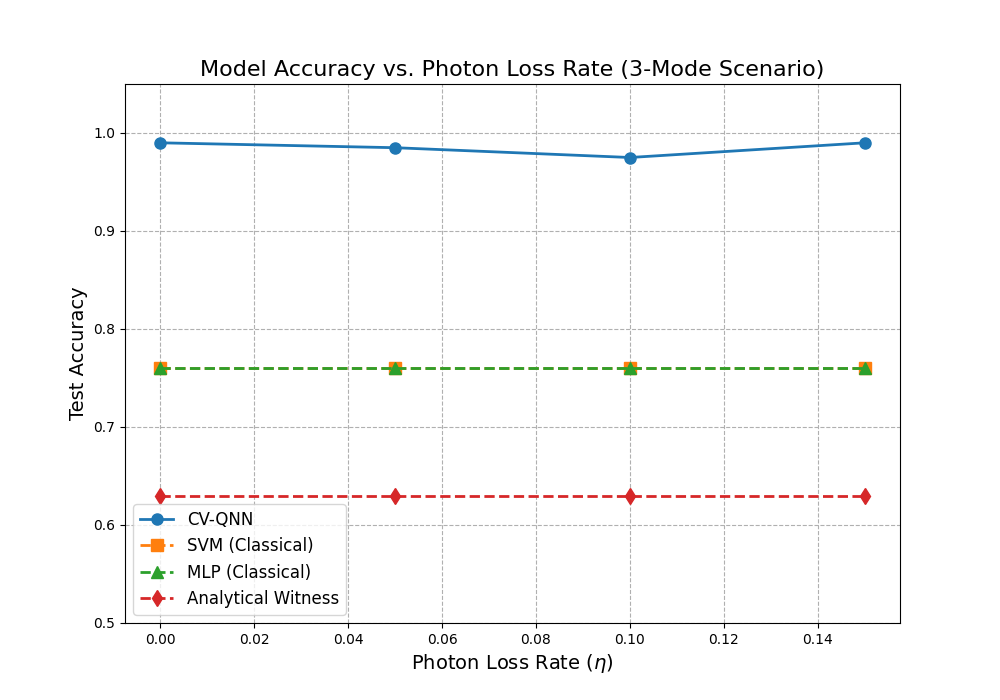}{0.9\columnwidth}
		\caption{\label{fig:noise_plot}\textbf{Accuracy vs.\ photon loss.} Three-mode CV-QNN under increasing per-layer loss, finite shots ($N_{\text{shots}}=1000$).}
	\end{figure}
	
	\subsection{Discussion}
	
	Our numerical results have established two key findings: (1) the CV-QNN architecture serves as a high-fidelity entanglement classifier in both two- and three-mode regimes, and (2) a stark, scalable performance gap emerges between our quantum-native model and powerful classical baselines as the system complexity increases. In this section, we delve deeper into the interpretation of these results, discussing the likely origins of the observed advantage, the architectural insights gained, and the practical limitations and future prospects of this approach.
	
	\subsubsection{Interpreting the empirical performance gap: Hilbert space vs. Classical shadows}
	
	The central question arising from our results is: what is the fundamental origin of the CV-QNN's superior scalability? We posit that the advantage stems from the model's ability to perform feature extraction directly within the Hilbert space of the quantum system, in stark contrast to classical models that are confined to processing a classical shadow of the state.
	
	Classical machine learning models, regardless of their complexity (be it an SVM or a deep neural network), fundamentally operate on classical data. In the context of quantum problems, this necessitates a preliminary step: the quantum state $\rho$ must be mapped to a classical feature vector. Even with a comprehensive set of engineered features—such as negativities, purities, and the full list of density matrix elements—this vector is ultimately a fixed, and often lossy, projection of the state's full information content. While this classical shadow may be sufficient for low-dimensional problems (as seen in our two-mode benchmark), its descriptive power rapidly diminishes as the Hilbert space grows. The intricate, high-order correlations that define genuine multipartite entanglement are "flattened" and obscured in this classical representation, forcing the ML model to solve an exponentially harder learning problem. The collapse of the classical models' performance to $\sim$75\% accuracy in the three-mode case is a clear symptom of this limitation.
	
	The CV-QNN, on the other hand, circumvents this bottleneck. The variational quantum circuit $U(\Theta)$ does not operate on a pre-processed list of features; it directly manipulates the quantum state $\rho$ itself. The layers of the circuit learn an optimal, data-driven transformation in Hilbert space that is specifically designed to make separability distinguishable. The sequence of interferometers, squeezers, and nonlinear Kerr gates can be viewed as an adaptive quantum feature extractor. It learns to "unscramble" the complex correlations inherent in the state, rotating and transforming them in phase space until the feature that distinguishes separable from entangled states—a feature that might be non-local and high-order—is amplified and mapped onto a classically accessible measurement outcome (the Fock basis probabilities). In essence, while the classical model attempts to learn from a static, blurry photograph of the quantum state, the CV-QNN is given the tools to dynamically manipulate the state itself to find the best possible angle for the picture.
	
	\subsubsection{Architectural Insights and the Role of Non-Gaussianity}
	
	The success of the CV-QNN is not merely a consequence of it being "quantum," but is deeply tied to its specific architecture. The interferometer ($U_I$) plays the crucial role of mixing the modes, enabling the network to learn and exploit multi-mode correlations, which are essential for identifying multipartite entanglement. The squeezing ($S$) and displacement ($D$) gates then perform affine transformations in the phase space of each mode, allowing the model to adaptively rescale and shift the state's representation to an optimal region for classification.
	
	However, the most critical component is arguably the non-Gaussian Kerr gate ($K$). Without this nonlinear activation, the entire circuit would be a Gaussian transformation. A purely Gaussian circuit can be efficiently simulated classically (via the Wigner representation) and is incapable of generating the complex, non-Gaussian features necessary to navigate the intricate boundary of the separable set. The Kerr gate breaks this limitation, allowing the CV-QNN to access a much richer functional space and learn the highly nonlinear decision boundaries required by the problem. The fact that this nonlinearity is applied directly to the quantum state, before any information is lost to measurement, is a key architectural advantage.
	
	\subsubsection{Limitations and Pathways to Experimental Realization}
	
	While our results are promising, a candid discussion of the limitations is necessary to chart a path toward practical implementation. The most significant challenge is the experimental realization of the non-Gaussian Kerr gate. Strong, deterministic Kerr nonlinearities are notoriously difficult to achieve in photonic systems. However, this does not render the approach infeasible. Several alternative strategies for introducing nonlinearity are actively being researched, including: (i) measurement-induced nonlinearities, where photon-number-resolving detectors and feed-forward are used to probabilistically enact a nonlinear transformation, and (ii) coupling the system to ancillary systems, such as atomic ensembles or specially prepared non-Gaussian states like GKP states. Future work should explore training CV-QNNs with these more experimentally friendly, albeit potentially probabilistic, sources of nonlinearity.
	
	Furthermore, our simulations were performed within a truncated Fock space. While we controlled for errors using a trace penalty, a full analysis would require studying the convergence of the model's performance as the cutoff dimension is increased, which is a computationally intensive task. Finally, while we have demonstrated a robust performance gap, the question of a formal quantum advantage in terms of computational complexity remains open. Proving such an advantage would require a deeper theoretical analysis of the problem's complexity class and the resources required by the quantum and classical algorithms.

	\section{Conclusion}
	\label{sec:con}
	In this work, we have introduced and thoroughly investigated a hybrid quantum-classical framework for the data-driven detection of entanglement. By framing a Continuous-Variable Quantum Neural Network (CV-QNN) as a trainable, nonlinear entanglement witness, we have addressed the significant challenge of certifying entanglement in increasingly complex quantum systems. Our approach leverages the native processing capabilities of CV quantum circuits to learn a witness functional directly from labeled data, bypassing the need for analytical derivations or complete state tomography.
	
	Our primary contribution is the strong numerical evidence of an empirical scalable performance gap. Through extensive simulations, we demonstrated that while our CV-QNN achieves near-perfect classification accuracy ($>$99\%) on both two- and three-mode systems, powerful classical baselines (SVM and MLP) fail to scale. Their performance collapses dramatically when moving to the more complex three-mode regime, even when granted access to the same post-measurement information as the quantum model's classical head. This finding suggests that the CV-QNN's ability to perform feature extraction directly within the high-dimensional Hilbert space provides a tangible advantage over classical models that are constrained to operate on a "classical shadow" of the quantum state.
	
	Furthermore, we have shown that this performance advantage is not a fragile artifact of idealized simulations. The CV-QNN exhibits remarkable robustness to photon loss, the most prevalent noise source in its native photonic hardware platform. The fact that the noisy quantum model still significantly outperforms the ideal classical models underscores the practical potential of this approach. On the theoretical front, we have provided a rigorous underpinning for our model's expressive power, proving that with an informationally complete measurement scheme, it can approximate any continuous witness functional on a compact set of states.
	
	This research positions CV-QNNs as a highly promising tool for a range of quantum characterization, verification, and validation (QCVV) tasks. Looking forward, this work opens several exciting avenues for future investigation. The most immediate step is the experimental implementation of this learned witness on near-term photonic quantum hardware to validate the observed performance gap in a real-world setting. On the theoretical side, a deeper analysis into the sample complexity and formal computational complexity could help to transition the discussion from an empirical gap to a provable quantum advantage. Finally, applying this powerful framework to even more challenging problems, such as the detection of bound entanglement, could further illuminate the unique capabilities of quantum machine learning for exploring the fundamental properties of the quantum world.

\section*{References}
\bibliographystyle{unsrt}
\bibliography{vosq-1}


	\clearpage
	\appendix
	
	\section{Proof of the Universal Approximation Capability}
	\label{app:universal_witness_proof}
	
	Here, we provide a more rigorous proof for the claim that our hybrid CV-QNN architecture, equipped with an informationally complete (IC) readout, can uniformly approximate any continuous functional on a compact set of quantum states. The proof combines the properties of IC-POVMs with the classical Universal Approximation Theorem for neural networks, framed within the context of the Stone-Weierstrass theorem.
	
	First, we establish the continuity of the quantum-to-classical mapping.
	
	\begin{lemma}[Continuity of the Feature Map]
		Let $\mathcal{M} = \{M_k\}_{k=1}^K$ be a finite POVM on a Hilbert space $\mathcal{H}_d$. Define the feature map $g: \mathcal{D}(\mathcal{H}_d) \to \mathbb{R}^K$ by $g(\rho) = (\mathrm{Tr}(M_1\rho), \dots, \mathrm{Tr}(M_K\rho))$. This map is continuous with respect to the trace-norm topology on the space of density operators $\mathcal{D}(\mathcal{H}_d)$. Specifically, it is a contraction in the $L_1$-norm:
		\[ \| g(\rho_1) - g(\rho_2) \|_{1} \le \sum_{k=1}^K \|M_k\|_\infty \|\rho_1 - \rho_2\|_1. \]
		Since for any POVM element $\|M_k\|_\infty \le 1$, this map is Lipschitz continuous.
	\end{lemma}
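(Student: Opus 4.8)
The plan is to prove the displayed inequality directly; continuity and Lipschitz continuity then follow immediately. First I would note that the norm appearing on the left is the $\ell_1$ norm on $\mathbb{R}^K$, so it suffices to control each coordinate separately and sum. Using linearity of the trace,
\[
\| g(\rho_1) - g(\rho_2) \|_{1} \;=\; \sum_{k=1}^{K} \bigl| \mathrm{Tr}(M_k\rho_1) - \mathrm{Tr}(M_k\rho_2) \bigr| \;=\; \sum_{k=1}^{K} \bigl| \mathrm{Tr}\!\bigl( M_k (\rho_1-\rho_2) \bigr) \bigr|.
\]

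Second, I would bound each summand by the matrix Hölder inequality (Schatten $\infty$–$1$ duality): for a bounded operator $X$ and a trace-class operator $Y$, $|\mathrm{Tr}(XY)| \le \|X\|_\infty \|Y\|_1$. In the finite-dimensional setting of $\mathcal{H}_d$ this is elementary and I would include the one-line argument: writing the Hermitian difference $\Delta = \rho_1-\rho_2$ in its spectral decomposition $\Delta = \sum_j \lambda_j \ket{e_j}\bra{e_j}$, one has $|\mathrm{Tr}(M_k\Delta)| \le \sum_j |\lambda_j|\,|\bra{e_j}M_k\ket{e_j}| \le \|M_k\|_\infty \sum_j |\lambda_j| = \|M_k\|_\infty \|\Delta\|_1$. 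Applying this termwise yields
\[
\| g(\rho_1) - g(\rho_2) \|_{1} \;\le\; \left( \sum_{k=1}^{K} \|M_k\|_\infty \right) \|\rho_1 - \rho_2\|_1,
\]
which is exactly the claimed estimate.

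Third, I would invoke the defining property of a POVM: each effect obeys $0 \preceq M_k \preceq I$, so $\|M_k\|_\infty \le 1$, and hence the Lipschitz constant is at most $K$. This makes $g$ Lipschitz — in particular uniformly continuous — as a map from $(\mathcal{D}(\mathcal{H}_d),\|\cdot\|_1)$ to $(\mathbb{R}^K,\|\cdot\|_1)$; since all norms on $\mathbb{R}^K$ are equivalent, the continuity statement is independent of the norm chosen on the codomain, which is all that the proof of the Theorem requires.

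There is no real obstacle here. The only point needing a line of care is the matrix Hölder inequality, and that is immediate in finite dimension via the spectral theorem applied to the Hermitian operator $\rho_1-\rho_2$ (trace-zero, but that is not even needed). Accordingly, in the write-up I would keep the emphasis on stating the Lipschitz bound cleanly and on flagging that this supplies precisely the ``continuous feature map'' hypothesis used in the main Theorem, rather than on any delicate analytic estimate.
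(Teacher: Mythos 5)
Your proof is correct and follows essentially the same route as the paper's: expand the $\ell_1$ distance coordinate-wise via linearity of the trace, bound each term by the H\"older/Schatten duality $|\mathrm{Tr}(M_k\Delta)|\le\|M_k\|_\infty\|\Delta\|_1$, and conclude with $\|M_k\|_\infty\le 1$ from the POVM condition. The only difference is that you supply a self-contained spectral-decomposition proof of the trace H\"older inequality where the paper simply invokes it, which is a harmless (and in finite dimension, welcome) addition.
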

	\begin{proof}
		The proof follows directly from the properties of the trace norm. For each component $k$ of the vector difference, we have:
		\begin{align}
			| [g(\rho_1)]_k - [g(\rho_2)]_k | &= | \mathrm{Tr}(M_k \rho_1) - \mathrm{Tr}(M_k \rho_2) | \nonumber \\
			&= | \mathrm{Tr}(M_k (\rho_1 - \rho_2)) |.
		\end{align}
		Using Holder's inequality for trace, $|\mathrm{Tr}(AB)| \le \|A\|_\infty \|B\|_1$, where $\|A\|_\infty$ is the operator norm (largest singular value), we get:
		\begin{equation}
			| \mathrm{Tr}(M_k (\rho_1 - \rho_2)) | \le \|M_k\|_\infty \|\rho_1 - \rho_2\|_1.
		\end{equation}
		Since each element of a POVM satisfies $0 \le M_k \le I$, we have $\|M_k\|_\infty \le 1$. Summing over all $K$ components for the $L_1$-norm of the vector gives the result. This continuity ensures that small changes in the input state lead to small changes in the classical feature vector.
	\end{proof}
	
	Now we state and prove the main theorem.
	
	\begin{theorem}[Approximation Capability of the Learned CV-QNN Witness]
		Let $\mathcal{X} \subset \mathcal{D}(\mathcal{H}_d)$ be a compact set of density operators. Let the family of feature maps $\{g_\Theta: \rho \mapsto \vec{p}(\rho)\}$, generated by a variational circuit $U_\Theta$ and an IC-POVM, be injective for some $\Theta^*$. Then the class of witness functionals
		\[\mathcal{F} = \{ \rho \mapsto f_\Phi(g_{\Theta^*}(\rho)) \mid \Phi \in \mathcal{P}_\Phi \}\]
		where $f_\Phi$ is a standard MLP, is dense in the space of all continuous real-valued functions on $\mathcal{X}$, $C(\mathcal{X})$, in the uniform norm.
	\end{theorem}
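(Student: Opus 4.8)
The plan is to factor the approximation problem through the image of the feature map and then invoke the classical Universal Approximation Theorem (UAT). First I would record that, by the preceding Lemma, $g_{\Theta^*}$ is Lipschitz continuous on $\mathcal{X}$ with respect to the trace norm, and by hypothesis it is injective there. Since $\mathcal{X}$ is compact and $\mathbb{R}^K$ is Hausdorff, a continuous injection out of a compact space is a homeomorphism onto its image, so $\mathcal{Y} := g_{\Theta^*}(\mathcal{X})$ is a compact subset of $\mathbb{R}^K$ and the inverse $g_{\Theta^*}^{-1}:\mathcal{Y}\to\mathcal{X}$ is continuous.

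Next I would use this homeomorphism to transport the target to $C(\mathcal{Y})$. Given any $h\in C(\mathcal{X})$ and any $\varepsilon>0$, the function $\tilde h := h\circ g_{\Theta^*}^{-1}$ lies in $C(\mathcal{Y})$; by the Tietze extension theorem it extends to a continuous function on all of $\mathbb{R}^K$ (alternatively one works directly with the forms of the UAT that are stated for continuous functions on a compact domain). The UAT for feedforward networks with a non-polynomial activation then supplies an MLP $f_\Phi$, $\Phi\in\mathcal{P}_\Phi$, with $\sup_{y\in\mathcal{Y}}|f_\Phi(y)-\tilde h(y)|<\varepsilon$. Pulling the estimate back along $g_{\Theta^*}$ closes the argument: for this $\Phi$,
\[
\sup_{\rho\in\mathcal{X}}\bigl|f_\Phi(g_{\Theta^*}(\rho))-h(\rho)\bigr|
=\sup_{y\in\mathcal{Y}}\bigl|f_\Phi(y)-\tilde h(y)\bigr|<\varepsilon,
\]
where the equality uses that $g_{\Theta^*}$ maps $\mathcal{X}$ bijectively onto $\mathcal{Y}$. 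Hence $\mathcal{F}$ is dense in $C(\mathcal{X})$ under the uniform norm.

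An equivalent route, matching the Stone--Weierstrass framing mentioned in the excerpt, is to argue that injectivity of $g_{\Theta^*}$ makes the coordinate functions $\rho\mapsto[g_{\Theta^*}(\rho)]_k$ a point-separating family on $\mathcal{X}$, so the subalgebra they generate (with constants) is uniformly dense by Stone--Weierstrass; one then notes that each such polynomial in the coordinates is itself uniformly approximable on the compact set $\mathcal{Y}$ by an MLP, composing the two approximations with a triangle inequality. Either way, the substantive content is the interface between the two ingredients rather than a hard estimate.

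The main obstacle is precisely that interface: one must invoke the UAT in a version valid for a continuous function on the possibly irregular compact set $\mathcal{Y}\subset\mathbb{R}^K$ (not merely a hypercube or all of $\mathbb{R}^K$), which is exactly what the compactness-plus-Hausdorff homeomorphism argument (or the Tietze extension) is there to guarantee; and one must make genuine use of the injectivity hypothesis, since without it $g_{\Theta^*}^{-1}$ is undefined, distinct states collapse to a common feature vector, and no element of $\mathcal{F}$ can approximate any $h\in C(\mathcal{X})$ that separates those states. Checking that the chosen UAT statement and activation class are compatible with $f_\Phi$ as defined in the paper is the only other point requiring care.
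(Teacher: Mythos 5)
Your argument is correct and follows essentially the same three-step route as the paper's own proof: the continuous injection from a compact set is a homeomorphism onto its image, the target functional factors through that image as a continuous function on a compact subset of $\mathbb{R}^K$, and the classical Universal Approximation Theorem finishes the job. Your added care about invoking the UAT on an irregular compact set (via Tietze extension) and the alternative Stone--Weierstrass phrasing are welcome refinements of the same skeleton rather than a different proof.
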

	\begin{proof}[Proof Sketch]
		The proof proceeds in three steps:
		\begin{enumerate}
			\item \textbf{Homeomorphism via IC Measurement:} Since the feature map $g_{\Theta^*}: \mathcal{X} \to \mathbb{R}^K$ is injective by the definition of an IC-POVM, and continuous by Lemma 1, it establishes a homeomorphism between the compact set of states $\mathcal{X}$ and its image, the compact set of classical feature vectors $\mathcal{P} = g_{\Theta^*}(\mathcal{X})$. This means there is a one-to-one and continuous correspondence between quantum states in $\mathcal{X}$ and classical vectors in $\mathcal{P}$.
			
			\item \textbf{Induced Functional on Classical Space:} For any arbitrary continuous functional we wish to learn, $F: \mathcal{X} \to \mathbb{R}$, this homeomorphism guarantees the existence of a corresponding continuous function $\tilde{F}: \mathcal{P} \to \mathbb{R}$ such that $F(\rho) = \tilde{F}(g_{\Theta^*}(\rho))$. In other words, the task of learning the quantum functional $F$ is equivalent to learning the classical function $\tilde{F}$ on the feature space $\mathcal{P}$.
			
			\item \textbf{Universal Approximation on Feature Space:} The classical feature space $\mathcal{P}$ is a compact subset of $\mathbb{R}^K$. The Universal Approximation Theorem for neural networks states that a standard MLP with at least one hidden layer and a non-polynomial activation function (like ReLU or sigmoid) can uniformly approximate any continuous function on a compact subset of $\mathbb{R}^K$ \cite{r20}. Therefore, for any $\epsilon > 0$, there exists a set of classical parameters $\Phi^*$ such that:
			\[ \sup_{\vec{p} \in \mathcal{P}} |f_{\Phi^*}(\vec{p}) - \tilde{F}(\vec{p})| < \epsilon. \]
		\end{enumerate}
		Combining these steps, we have $\sup_{\rho \in \mathcal{X}} |f_{\Phi^*}(g_{\Theta^*}(\rho)) - F(\rho)| < \epsilon$. Since $W_{\Theta^*,\Phi^*}(\rho) = f_{\Phi^*}(g_{\Theta^*}(\rho))$, this proves that our hybrid model architecture is dense in $C(\mathcal{X})$ and can therefore approximate any continuous witness functional to arbitrary accuracy, given sufficient classical resources.
	\end{proof}
	
	\section{Implementation Details, Tuning, and Hyperparameters}
	\label{app:impl}
	All numerical experiments were conducted using Python 3.8. The quantum circuits were simulated using the Strawberry Fields library (v0.23) on its Fock basis backend \cite{r29}. The hybrid model construction and training were managed using TensorFlow (v2.10) \cite{r21}. Classical baselines were implemented using scikit-learn (v1.0.2).
	
	\subsection{CV-QNN Architecture Details}
	\begin{itemize}
		\item \textbf{Quantum Circuit:} The variational ansatz consists of $L=2$ layers for all experiments. The parameters of all gates (rotations, beamsplitters, squeezers, displacements, Kerr) were initialized randomly by sampling from a normal distribution $\mathcal{N}(0, 0.1)$.
		\item \textbf{Classical Head:} The classical processing unit is an MLP with two hidden layers, each containing 64 neurons. The activation function for the hidden layers is the Rectified Linear Unit (ReLU). To improve training stability and prevent internal covariate shift, Layer Normalization is applied after each hidden layer. A dropout rate of 0.1 is also applied during training to mitigate overfitting. The final output layer consists of a single neuron with no activation function, whose scalar output is interpreted as the witness value $W(\rho)$.
	\end{itemize}
	
	\subsection{Classical Baseline Architectures and Tuning}
	To ensure a fair and rigorous comparison, the classical baseline models were carefully selected and their hyperparameters were thoroughly tuned.
	\begin{itemize}
		\item \textbf{Support Vector Machine (SVM):} We used an SVM with a non-linear Radial Basis Function (RBF) kernel. A grid search over a logarithmic scale was performed to find the optimal hyperparameters for the regularization parameter $C \in \{1, 10, 100\}$ and the kernel coefficient $\gamma \in \{10^{-3}, 10^{-2}, 10^{-1}, \text{'scale'}, \text{'auto'}\}$.
		\item \textbf{Multilayer Perceptron (MLP):} The classical MLP was designed to have a comparable or greater number of trainable parameters than the classical head of our CV-QNN. For the two-mode problem, the architecture was [Input Features, 64, 32, 1], while for the more complex three-mode problem, it was expanded to [Input Features, 128, 64, 1]. The same ReLU activation, Layer Normalization, and dropout strategy were employed.
	\end{itemize}
	For both baselines, the model with the best performance on a 5-fold cross-validation of the training data was selected for final evaluation on the held-out test set.
	
	\section{Data Generation Pipeline}
	\label{app:data}
	The datasets were algorithmically generated to ensure a diverse and balanced distribution of states, preventing the models from learning trivial dataset biases. The process is outlined in Algorithm \ref{alg:data_generation_app}.
	
	\begin{figure}[H]
		\begin{algorithmic}[1]
			\caption{Detailed Data Generation Pipeline}
			\label{alg:data_generation_app}
			\State \textbf{Input:} Number of modes $M$, Fock cutoff $d$, number of samples $N$.
			\State \textbf{Initialize:} Empty lists `states`, `labels`.
			\For{$i=1$ to $N$}
			\State Randomly select a state family $\mathcal{F}$ from the predefined set.
			\If{$\mathcal{F}$ is an entangled family}
			\State Generate a pure state $\ket{\psi}$ from $\mathcal{F}$ with randomly sampled parameters (e.g., squeezing $r \in [0, 1.5]$, coherent amplitude $|\alpha| \in [0, 1]$).
			\State Sample mixing probability $p \sim U(0, 0.3)$.
			\State Create mixed state $\rho = (1-p)\ket{\psi}\bra{\psi} + p \frac{I}{d^M}$.
			\State \textbf{Verify Entanglement:} Compute negativity for all bipartite splits. If all are non-positive, discard the state and continue.
			\State Append valid $\rho$ to `states` and $1$ to `labels`.
			\ElsIf{$\mathcal{F}$ is a separable family}
			\State Generate $M$ single-mode states $\{\rho_j\}_{j=1}^M$ (random coherent states $\ket{\alpha}$ or Fock states $\ket{n}$ with $n < d$).
			\State Construct the product state $\rho = \bigotimes_{j=1}^M \rho_j$.
			\State Append $\rho$ to `states` and $0$ to `labels`.
			\EndIf
			\EndFor
			\State \textbf{Balance Dataset:} Undersample the majority class to ensure an equal number of separable and entangled samples.
			\State \textbf{Split Dataset:} Partition the balanced data into training (60\%), validation (20\%), and test (20\%) sets using a fixed random seed (42) for reproducibility.
			\State \textbf{Output:} Train, validation, and test sets of density matrices and corresponding labels.
		\end{algorithmic}
	\end{figure}
	
	\section{Statistical Analysis Methods}
	\label{app:stats}
	To ensure the robustness of our conclusions and the statistical significance of the observed performance gap, we employed a standard and rigorous statistical validation technique.
	
	\subsection{Stratified Bootstrap for Confidence Intervals}
	Point estimates of accuracy can be misleading, especially with a finite test set. To provide a reliable measure of the uncertainty in our performance metrics, we computed 95\% confidence intervals (CIs) using the **stratified bootstrap** method. The procedure is as follows:
	\begin{enumerate}
		\item Let the test set be $\mathcal{T} = \{(x_i, y_i)\}_{i=1}^{N_{\text{test}}}$.
		\item For $b=1, \dots, B$ (where we use $B=1000$ bootstrap resamples):
		\begin{enumerate}
			\item Create a new bootstrap sample $\mathcal{T}_b$ by drawing $N_{\text{test}}$ samples from $\mathcal{T}$ \emph{with replacement}.
			\item To maintain the class distribution of the original test set, we perform the sampling in a stratified manner: samples for the 'separable' class are drawn only from the separable subset of $\mathcal{T}$, and similarly for the 'entangled' class.
			\item Evaluate the accuracy of the trained model on the bootstrap sample $\mathcal{T}_b$, yielding an accuracy score $A_b$.
		\end{enumerate}
		\item The collection of scores $\{A_1, \dots, A_B\}$ forms the bootstrap distribution of the accuracy.
		\item The 95\% confidence interval is then calculated by taking the 2.5th and 97.5th percentiles of this distribution.
	\end{enumerate}
	An observed performance gap between two models is considered statistically significant if their 95\% CIs are non-overlapping. As seen in Table \ref{tab:3mode}, the CIs for the CV-QNN and the classical baselines are widely separated, providing strong evidence for the significance of the empirical gap.
	
\end{document}